\def\BibTeX{{\rm B\kern-.05em{\sc i\kern-.025em b}\kern-.08em
    T\kern-.1667em\lower.7ex\hbox{E}\kern-.125emX}}
\def\psfancypar#1#2{\begingroup\def\par{\endgraf\endgroup\lineskiplimit=0pt}
               \setbox2=\hbox{\large\sc #2}
               \newdimen\tmpht \tmpht \ht2 \advance\tmpht by \baselineskip
               \font\hhuge=cmti12 at \tmpht
              \setbox1=\hbox{{\hhuge #1}}
               \count7=\tmpht \count8=\ht1
               \divide\count8 by 1000 \divide\count7 by \count8
               \tmpht=.001\tmpht\multiply\tmpht by \count7
               \font\hhuge=cmbx10 at \tmpht
               \setbox1=\hbox{{\hhuge #1}}
               \noindent
                \hangindent1.05\wd1
               \hangafter=-2 {\hskip-\hangindent
               \lower1\ht1\hbox{\raise1.0\ht2\copy1}%
                \kern-0\wd1}\copy2\lineskiplimit=-1000pt}
\newcommand{\boxit}[1]{\vbox{\hrule\hbox{\vrule\kern6pt
\vbox{\kern6pt#1 \vspace{-10 pt}\kern6pt}\kern6pt\vrule}\hrule}}
\newcommand{\boxitsl}[1]{\vbox{\hrule\hbox{\vrule\kern6pt
\vbox{\kern6pt#1 \vspace{3 pt}\kern6pt}\kern6pt\vrule}\hrule}}
\def\thetabf{{\mbox{\boldmath$\theta$\unboldmath}}}
\newcommand{\E}{\mbox{{\rm E}}}
\def\reals{ { {\rm  I \kern-0.15em R }  } }
\def\complex{\hbox{ \,{{\rm C} \kern-0.50em \raise0.20ex {  |}}\,}}
\def\Pibf{\hbox{$\boldsymbol{\Pi}$}}
\def\abf{\hbox{\bf a}}
\def\cbf{\hbox{\bf c}}
\def\ebf{\hbox{\bf e}}
\def\nbf{\hbox{\bf n}}
\def\sbf{\hbox{\bf s}}
\def\vbf{\hbox{\bf v}}
\def\xbf{\hbox{\bf x}}
\def\ybf{\hbox{\bf y}}
\def\xbf{\hbox{\bf x}}
\def\ybf{\hbox{\bf y}}
\def\Bbf{\hbox{\bf B}}
\def\Dbf{\hbox{\bf D}}
\def\Hbf{\hbox{\bf H}}
\def\Ibf{\hbox{\bf I}}
\def\Jbf{\hbox{\bf J}}
\def\Pbf{\hbox{\bf P}}
\def\Qbf{\hbox{\bf Q}}
\def\Sbf{\hbox{\bf S}}
\def\Xbf{\hbox{\bf X}}
\def\Zbf{\hbox{\bf Z}}
\def\be{\vskip .3cm \begin{equation}}
\def\ee{\end{equation} \vskip .4cm \noindent}
\newcounter{remarknr}
\renewcommand{\theremarknr}{\arabic{remarknr}}
\newtheorem{theorem}{Theorem}
\newtheorem{corollary}{Corollary}
\newcounter{assumpnr}
\renewcommand{\theassumpnr}{\arabic{assumpnr}}
\newcommand{\pbox}{\hfill $\Box$}
\newcounter{examplenr} 
\renewcommand{\theexamplenr}{\arabic{examplenr}}
\begin{document}

\title{\large 
STATISTICAL ANALYSIS OF TARGET PARAMETER ESTIMATION USING PASSIVE RADAR
\thanks{The research was funded by the strategic innovation programme “Smartare Elektroniksystem”, a joint research project financed by VINNOVA, Formas and the Swedish Energy Agency, and by SAAB.}
}

\author{\IEEEauthorblockN{Mats Viberg\thanks{Mats Viberg is also with Chalmers University of Technology.}}
\IEEEauthorblockA{
\textit{Blekinge Institute of Technology}\\
Karlskrona, Sweden 
}
\and
\IEEEauthorblockN{Daniele Gerosa}
\IEEEauthorblockA{
\textit{Chalmers University of Technology}\\
Göteborg, Sweden 
}
\and
\IEEEauthorblockN{Tomas McKelvey}
\IEEEauthorblockA{
\textit{Chalmers University of Technology}\\
Göteborg, Sweden 
}
\and
\IEEEauthorblockN{Thomas Eriksson}
\IEEEauthorblockA{
\textit{Chalmers University of Technology}\\
Göteborg, Sweden 
}
}

\maketitle

\begin{abstract}
A passive radar system uses one or more so-called Illuminators of Opportunity (IO) to detect and localize targets. In such systems, a reference channel is often used at each receiving node to capture the transmitted IO signal, while targets are detected using the main surveillance channel. The purpose of the present contribution is to analyze a method for estimating the target 
parameters in such a system. Specifically, we quantify the additional error contribution due to not knowing the transmitted IO waveform perfectly. A sufficient condition for this error to be negligible as compared to errors due to clutter and noise in the surveillance channel is then given.
\end{abstract}



\section{Introduction}


A passive radar system takes advantage of existing electromagnetic signals to detect and locate targets. The so-called Illuminators of Opportunity (IO) can be TV or radio transmitters, or even satellites
\cite{kuschel_tutorial_2019}. Such systems have received much attention in the signal processing and radar system community, due to advantages in terms of power consumption and covertness among others \cite{kuschel_tutorial_2019,greco_passive_2023}.

In passive radar, the lack of knowledge of the transmitted signal poses a challenge, see e.g. \cite{tong_cramerrao_2019,zhang_sparsity-based_2019,zhang_maximum_2020} for approaches to jointly estimate this signal and the target parameters. We assume here that the Receiver Nodes (RN) are equipped with a separate Reference Channel (RC), directed towards the IO transmitter. In essence, the (noisy) RC signal then replaces the IO waveform when estimating the target parameters from the Surveillance Channel (SC) data. Such a setting is considered, e.g. in \cite{gogineni_passive_2018}, where also a technique to handle multipath (clutter) in the RC is proposed, using the principal component of the cross-correlation matrix between the RC and the RC signals. In \cite{zhang_maximum_2019}, the Maximum Likelihood (ML) estimator is derived for this scenario, assuming both channels to be corrupted by White Gaussian Noise (WGN). In the present contribution, we consider the case of WGN in the RC, whereas the SC is corrupted by Direct-Path Interference (DPI) as well as Clutter Interference (CI).
We consider the problem of estimating the target position and velocity in a 2D scenario, and the goal is to assess the estimation performance in the presence of noise in both receiving channels. The estimator under study is based on the Extensive Cancellation Approach (ECA) of \cite{colone_multistage_2009}, which estimates target parameters using a matched-filtering approach after canceling the DPI and CI. 
A similar approach is also pursued in \cite{zhou_direct_2024}, where it is compared to the full ML that uses both channels to estimate the IO signal. The comparison in \cite{zhou_direct_2024} is based on the Cramér-Rao Lower Bound (CRLB) for the localization problem. Our approach is based on a first-order statistical analysis of the estimated target parameters, assuming ``high enough" Signal-to-Noise Ratio (SNR) in both the RC and the SC channels. We derive an explicit expression of the covariance matrix of the estimation error of the target delay and Doppler parameters. This is recognized as the sum of the CRLB assuming a perfectly known transmitted signal, and an additional term due to the noise in the RC. This enables us to give sufficient conditions on the required SNR in the RC, in
order for the IO signal errors to be negligible as compared to the errors due to interference and clutter in the SC.



\section{Problem Description}

We consider a passive radar scenario where $K$ RN:s collect data emanating from reflections of an unknown IO signal. 
For simplicity, we assume a single IO and a single target.
The extension to multiple IO:s is straightforward provided they transmit orthogonal waveforms as in, e.g., \cite{tajer_optimal_2010}. We also remark that the presented method can be applied to multi-target scenarios by searching for multiple peaks in the global likelihood function. Iterative approaches similar to \cite{yi_suboptimal_2020} are also possible.

We assume the position of the IO to be known at each RN. 
The target of interest is moving at constant non-zero speed, and the goal is to estimate its position and velocity. The DPI and CI are assumed stationary during the observation interval. The RN:s transmit data to a central node (CN). The data from different RN:s are synchronized to time-delay (relative bandwidth), but not to phase (relative carrier), see e.g. \cite{m_weiss_synchronisation_2004}.

\section{Data Model}

The IO transmits a signal $\Re\{s(t)e^{j\omega_{c}t}\}$, where $s(t)$ is the baseband complex envelope and $\omega_c$ the carrier frequency. 
The available baseband data after demodulation at receiver node $k,\ k=1,\dots,K$, is modeled by
\begin{align}
    x_{k}(t) &= a_{k} s_k(t) + n_{k}(t) \label{eq:ref} \\
    y_{k}(t) &= b_{k} s_k(t) +  
    y_k^c(t)
    + d_{k} s_k(t-\tau_{k}) e^{j\omega_k t} + e_{k}(t)\,, \label{eq:surv}
\end{align}
where $x_k(t)$ and $y_k(t)$ represent the data from the RC and the SC respectively. 

In both equations above, $s_k(t)$ is the IO waveform arriving at node $k$, thus serving as a reference for that node. 
The RC signal
(\ref{eq:ref}) consists of the direct path, where $a_{k}$ accounts for propagation attenuation and receiver characteristics, and receiver noise $n_{k}(t)$. In the SC (\ref{eq:surv}), the first term is the DPI, the second contains clutter from stationary objects,
the third is the target reflection, and the last term is receiver noise. The complex target amplitude $d_k$ accounts for propagation attenuation, bistatic Radar Cross Section (RCS), as well as antenna and receiver gain. The factor $e^{j\omega_k t}$ represents the Doppler effect due to the target motion, which is assumed to be with constant velocity during the data collection interval.
The target model in (\ref{eq:surv}) assumes a low speed, a narrowband baseband signal, and a sufficiently short observation window.
Note that the time-delay and Doppler parameters are related to the target position and velocity through the known locations of the various nodes. 

Let the data collection time in the SC be $0\leq t < T$, in which $N$ samples are collected at time instances $t_n = n\Delta T$, $\Delta T = T/N$, $n=0,\dots,N-1$.
Due to the time-delay of the clutter and target components, it is assumed that the data collection starts earlier for the RC. Let the maximum delay of interest be $M$ samples. 
The available data samples are then
$\{x_{k}(t_n)\}_{n=-M}^{N-1}$ and
$\{y_{k}(t_n)\}_{n=0}^{N-1}$, which
can be put into vector form as
\begin{align}
    \xbf_{k}^R &= a_{k}^{} \sbf_k^{R} + \nbf_{k}^R \label{eq:vref} \\
    \ybf_{k} &= b_{k} \sbf_k + 
    \ybf_k^c +
    d_{k}\, \sbf_k(\tau_{k}) \odot \vbf(\omega_k) + \ebf_{k}\,, \label{eq:vsurv}
\end{align}
where 
$\xbf_{k}^R,\ \sbf_{k}^{R}$ and $\nbf_{k}^{R}$ are column vectors in $ \mathbb{C}^{M+N}$, and
$\ybf_k,\ \sbf_k,\ \ybf_k^c,\ 
\sbf_k(\tau_k)$ and $\ebf_k$ are in $\mathbb{C}^{N}$.
We use $\sbf_k(\tau_k)$ to denote the IO waveform $\sbf_k$ with a time-delay $\tau_k$.
Further, the target Doppler is modeled by the DFT vector $\vbf(\omega_k)$,
\begin{equation}
    \vbf(\omega_k) = 
    [1,e^{j\omega_k \Delta T},  \dots,e^{j\omega_k (N-1) \Delta T}]^T,
\end{equation}
and $\odot$ represents the Schur/Hadamard product (elementwise multiplication). 
In general, the clutter component is an integral over reflections from all illuminated distances. However, we assume here that $\ybf_k^c$ can be approximated as a linear combination of a finite set of past samples of the IO signal:
$$
\ybf_k^c = \sum\limits_{l\in \mathcal{L}_k}
c_l \sbf_k(l\Delta T)
=\sum\limits_{l=1}^L
c_l \sbf_k(l\Delta T)\,,
$$
where the set $\mathcal{L}_k$, of cardinality $L$, $L\leq M$, contains the time delays $l$ for which we expect clutter returns. For the sake of simplicity, we assume in the second equality that $\mathcal{L}_k=\{1,\dots,L\}, \forall k$.
Hence, we can express the clutter vector in (\ref{eq:vsurv}) as
\begin{equation}
    \ybf_k^c = \Sbf_k \cbf_k\,,
\end{equation}
where $\Sbf_k$ is an $N\times L$ Toeplitz matrix containing samples of $s_k(t_n)$ for 
$-L \leq n \leq N-2$,
and where $\cbf_k = [c_1,\dots,c_L]^T$ is the vector of FIR filter coefficients. It is assumed that $L<N-1$ and that $\Sbf_k$ has full column rank.
The noise samples 
$\{n_k(t_n)\}_{n=-L}^{N-1}$ and $\{ e_k(t_n)\}_{n=0}^{N-1}$ are all 
assumed to be i.i.d. WGN with variances $\sigma_n^2$ and $\sigma_e^2$ respectively. 

\section{Target Parameter Estimation}

The estimation method to be investigated has the following steps, similar to \cite{colone_multistage_2009}. First, each RN uses the RC output as the ``true" IO waveform, next this is used to cancel the DPI and CI in the SC, and finally the target time-delay and Doppler parameters are estimated. These are fed to the CN, which performs the matching to the target position and speed parameters. 

\subsection{Interference Cancellation}

Under the above assumptions and modeling the IO waveform vector $\sbf_k$ as deterministic and unknown \cite{zhang_maximum_2019,zhou_direct_2024}, the global Maximum Likelihood (ML) approach leads to a combination of likelihood functions from each node. At node $k$, the negative log-likelihood function, ignoring constants, is given by
\begin{equation}
    \frac{1}{\sigma_n^2} \left\|
\xbf_{k}^R - a_{k} \sbf_k^{R} \right\|^2 + \frac{1}{\sigma_e^2} \left\|
    \ybf_{k} - b_{k} \sbf_k - \Sbf_k \cbf_k -
    d_{k}\, \abf(\tau_k,\omega_k)
    \right\|^2 ,
    \label{eq:ML1}
\end{equation}
where we have introduced the ``steering vector"
\begin{equation}
\abf(\tau_k,\omega_k) =
\sbf_k(\tau_{k}) \odot \vbf(\omega_k)\, .
\end{equation}

The ML estimate is now found by minimizing (\ref{eq:ML1}) with respect to all unknown parameters.
Since the SNR of the IO signal is assumed to be much stronger in the RC than in the SC, the IO waveform can be determined with a good approximation by only using the RC data \cite{colone_multistage_2009,  zhou_direct_2024}. The estimated IO signal part is then applied to the second term of (\ref{eq:ML1}) as $\hat{\sbf}_k=\xbf_k$,
$\hat{\Sbf}_k = \Xbf_k $, 
and $\hat{\sbf}_k(\tau_k)=\xbf_k(\tau_k)$,
respectively, where the sampling instances in $\xbf_k$ and $\Xbf_k$ have been synchronized with those in $\sbf_k$ and $\Sbf_k$, and where the unknown amplitude $a_k$ has been absorbed into the IO waveform estimate.
Inserted into (\ref{eq:ML1}), this yields
\begin{equation}
\ell(b_k,\cbf_k,d_k,\tau_k,\omega_k) = \left\|
    \ybf_{k} - b_{k} \xbf_k - \Xbf_k \cbf_k -
    d_{k}\, \hat{\abf}(\tau_k,\omega_k)
\right\|^2 ,
    \label{eq:ML2}
\end{equation}
where the estimated steering vector is defined as
\begin{equation}
\hat{\abf}(\tau_k,\omega_k)
=    \xbf_k(\tau_{k}) \odot \vbf(\omega_k)\, .
\label{eq:ahatdef}
\end{equation}

Minimizing (\ref{eq:ML2}) w.r.t. $b_k$ and $\cbf_k$ and substituting the resulting estimates back into (\ref{eq:ML2}) results in an effective cancellation of the direct IO and the clutter interference. To this end we introduce the noise-free and noisy interference matrices as
$$
\Sbf_I = [\sbf_k,\Sbf_k]\ ,\quad \Xbf_I = [\xbf_k,\Xbf_k]\, .
$$
The orthogonal projection matrices onto the orthogonal complements of the span of $\Sbf_I$ and $\Xbf_I$ are given by
\begin{align}
    {\Pibf}^{\perp} &=  \Ibf - {\Pibf}
    =  \Ibf -
    \Sbf_I\left(\Sbf_I^{H}\Sbf_I\right)^{-1}\Sbf_I^H
    \label{eq:Piperp} \\
    \hat{\Pibf}^{\perp} &=  \Ibf - \hat{\Pibf}
    =  \Ibf -
    \Xbf_I\left(\Xbf_I^{H}\Xbf_I\right)^{-1}\Xbf_I^H   .
    \label{eq:Piperphat}
\end{align}

With these definitions, the minimum of (\ref{eq:ML2}) w.r.t. $b_k$ and $\cbf_k$ reduces to the interference-cleaned version
\begin{equation}
    \ell(d_k,\tau_k,\omega_k) =
     \left\| \hat{\Pibf}^{\perp} \left( \ybf_k -
    d_{k}\, \hat{\abf}(\tau_k,\omega_k) \right) \right\|^2 .
\label{eq:ML3}
\end{equation}
Substituting the minimizing $d_k$ from (\ref{eq:ML3}) back into the criterion then results in the final form
\begin{equation}
    \ell(\tau_k,\omega_k) = 
    \left\| \hat{\Pibf}^{\perp}\ybf_k  -
\hat{\Pibf}^{\perp}\frac{\hat{\abf}(\tau_k,\omega_k) \hat{\abf}^H(\tau_k,\omega_k)}{\hat{\abf}^H(\tau_k,\omega_k)\hat{\Pibf}^{\perp}\hat{\abf}(\tau_k,\omega_k)}\, \hat{\Pibf}^{\perp}\ybf_k\right\|\, .
\end{equation}
Clearly, minimizing $\ell(\tau_k,\omega_k)$ is equivalent to maximizing the following interference-canceled and normalized version of the 2D delay-Doppler ambiguity function:
\begin{equation}
P_k(\tau_k,\omega_k) =
\frac{ |\hat{\abf}^H(\tau_k,\omega_k) \hat{\Pibf}^{\perp} \ybf_k|^2}{\hat{\abf}^H(\tau_k,\omega_k)\hat{\Pibf}^{\perp} \hat{\abf}(\tau_k,\omega_k)} =
\ybf_k^H \hat{\Pbf}
\, \ybf_k^{}
\, ,
\label{eq:ML4}
\end{equation}
where 
$\hat{\Pbf}$
is the orthogonal projection matrix onto the range space of $\hat{\Pibf}^{\perp}\hat{\bf a}(\tau_k,\omega_k)\,$:
\begin{equation}
\hat{\Pbf}= 
\frac{\hat{\Pibf}^{\perp}\hat{\abf}(\tau_k,\omega_k) \hat{\abf}^H(\tau_k,\omega_k)\hat{\Pibf}^{\perp}}{\hat{\abf}^H(\tau_k,\omega_k)\hat{\Pibf}^{\perp}\hat{\abf}(\tau_k,\omega_k)} = \Ibf - 
\hat{\Pbf}^{\perp}\, .
\label{eq:Pdef}
\end{equation}
Although (\ref{eq:ML4}) depends implicitly on the target position and velocity, the $k$:th receiver node can only evaluate the criterion with respect to the target delay and Doppler parameters relative its own position. Thus, at node $k$, (\ref{eq:ML4}) is computed on a $(\tau_k,\omega_k)$-grid, and the ``significant" values are transmitted to the central node for further processing. Though not considered in this paper, it is noted that the final detection decision should be made only after combining all RN information at the CN.

\subsection{Target Localization}

The final step is to combine the delay-Doppler information from all receiver nodes using a global Maximum Likelihood approach at the central node. 
The available data are the sampled versions of (\ref{eq:ML4}) from all nodes. Since the data are independent, the global likelihood function simply adds all contributions: 
\begin{equation}
V_{ML}(\thetabf) = \sum_{k=1}^K  P_k(\tau_k(\thetabf),\omega_k(\thetabf))\, ,
\label{eq:GlobalML}
\end{equation}
where $\tau_k=\tau_k(\thetabf)$ and $\omega_k=\omega_k(\thetabf)$ are known functions of the 4D target parameter vector $\thetabf$, which contains the $(x,y)$ coordinates as well as its speed in the $x$ and $y$ directions. In case the noise variances are different among the RN:s, their respective contribution should be suitably weighted in (\ref{eq:GlobalML}).
The global ML estimator is now to perform a search of (\ref{eq:GlobalML}) over the target parameters in 4 dimensions. 
The search can be mitigated by 
using only values of (\ref{eq:ML4}) that exceed a certain threshold. 
For each hypothesized target localization and velocity, the corresponding time-delay and Doppler parameters are calculated for each node. The resulting value of (\ref{eq:ML4}) is added to the global likelihood function, and if this sample is missing at a particular node, we simply add zero. It should be noted that the discretizations of $(\tau_k,\omega_k)$ at the different RN:s are not synchronized, and care must be taken when combining their information \cite{yang_multitarget_2022}.

\section{Statistical Performance Analysis}

The proposed global ML estimator is approximate in the sense that it uses the reference channel as if it were the true IO waveform. It is of interest to quantify analytically the effect of this approximation. Specifically, what SNR is required in the reference channel in order for the approximate ML estimates to achieve the Cramér-Rao lower bound for the target parameters, assuming a perfect IO waveform knowledge? 

To answer this question, we establish the first-order covariance matrix of the estimated target parameters, assuming a ``high enough" SNR in both the RC and the SC at each node. The first step is to establish consistency, in the sense that for noise-free data, the criterion function (\ref{eq:ML4}) is maximized by the true target delay-Doppler pair, as seen from the $k$:th RN. It is easy to establish this result if the IO signal is such that the steering vector is \textit{unambiguous}. By this we mean that the following holds true over the range of target parameters of interest:
\begin{equation}
\sbf(\tau_{0}) \odot \vbf(\omega_0) = \sbf(\tau) \odot \vbf(\omega)\  \Longleftrightarrow \ (\tau_0,\omega_0)=(\tau,\omega)\, .
\label{eq:unambiguous}
\end{equation}
In order to express the approximate covariance matrix in a compact form, we introduce the following notation. First, let
\begin{equation}
\Dbf_k = \left[ \frac{\partial \abf(\tau_k(\thetabf),\omega_k(\thetabf))} {\partial\theta_1},\dots,\frac{\partial \abf(\tau_k(\thetabf),\omega_k(\thetabf))} {\partial\theta_4}
\right]
\end{equation}
denote the matrix of derivatives of the $k$:th steering vector with respect to the target parameters. Further, define the $N\times N(L+1)$ matrix
\begin{equation}
\Zbf_k = \left[
b_k\,\Ibf + d_k\,\text{diag}(\vbf(\omega_k))\quad \cbf_k^T\otimes \Ibf
\right]\, ,
\label{eq:Zdef}
\end{equation}
and let $\Jbf_k$ be a selection matrix such that
$$
\text{vec}(\Sbf_I) = \Jbf_k \sbf_I\,,
$$
where $\sbf_I=[
s_k(t_{-L}),\dots,s_k(t_{N-1})
]^T$. 
The three components in (\ref{eq:Zdef}) model the error contributions to $\hat{\thetabf}$ due to not knowing the IO waveform perfectly. The first term is due to imperfect DPI cancellation, the second handles the effect of using the incorrect steering vector in (\ref{eq:ML4}), i.e. a ``mismatched filter", while the last term comes from the imperfect clutter cancellation. 

Further, let $\Pbf$ be the noise-free version of $\hat{\Pbf}$ and
introduce the orthogonal projection matrix
\begin{equation}
\Tilde{\Pbf} = \Pibf^{\perp} \Pbf^{\perp}\Pibf^{\perp},
\end{equation}
which projects onto $\text{span}(\Pibf^\perp\Bbf)$, where $\Bbf$ is any matrix that spans the orthogonal complement of the steering vector $\abf$. We can now state the main result of this paper.
\begin{theorem}
Let $\hat{\thetabf}$ be obtained by maximizing (\ref{eq:GlobalML}) and assume the IO waveform be such that (\ref{eq:unambiguous}) holds. Assume further that the radar scenario is such that the 
$\{(\tau_k,\omega_k)\}_{k=1}^K$ pairs together uniquely determine $\thetabf$. Then, as $\sigma_n^2\rightarrow 0$ and $\sigma_e^2\rightarrow 0$ jointly, we have $\hat{\thetabf} \rightarrow\thetabf_0$ in probability; and its covariance matrix is to first order given by 
\begin{equation}
\E [(\hat{\thetabf} - \thetabf_0) (\hat{\thetabf} - \thetabf_0)^T ] \approx \textbf{CRB}_{\boldsymbol{\theta}} +
\Hbf^{-1} \Qbf \,\Hbf^{-1} ,
\label{eq:CRBascov} 
\end{equation}
where
\begin{align}
\textbf{CRB}_{\boldsymbol{\theta}}  &= \sigma_e^2\,\Hbf^{-1}
= \frac{\sigma_e^2}{2} \left( \sum\limits_{k=1}^K |d_k|^2\, \Re \left\{ \Dbf^H _k\Tilde{\Pbf}\,\Dbf_k^{}\right\} \right)^{-1} \\
\Qbf &= 2\sigma_n^2\,\,\sum\limits_{k=1}^K \frac{|d_k|^2}{|a_k|^2}\, \Re \left\{ \Dbf^H_k \Tilde{\Pbf}\,\Zbf_k^{}\Jbf_k^{} \Jbf_k^T\Zbf_k^{H}\Tilde{\Pbf}\,
\Dbf_k^{}\right\} \,.
\label{eq:Q}
\end{align}
The term $\textbf{CRB}_{\boldsymbol{\theta}}$ in (\ref{eq:CRBascov}) is the CRLB for $\thetabf$ assuming a perfectly known IO signal, and the second term quantifies the excess error due to the noise in the reference channel.
\end{theorem}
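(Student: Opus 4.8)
The plan is to treat $\hat{\thetabf}$ as the maximizer of the random criterion $V_{ML}$ and apply the standard small-error (first-order) analysis of such an M-estimator, tracking the two independent noise sources---the RC noise $\nbf_k^R$ (variance $\sigma_n^2$) and the SC noise $\ebf_k$ (variance $\sigma_e^2$)---separately. First I would establish consistency. In the joint limit $\sigma_n^2,\sigma_e^2\to 0$ we have $\hat{\Pbf}\to\Pbf$ and $\ybf_k$ tends to its deterministic part $b_k\sbf_k+\Sbf_k\cbf_k+d_k\abf(\tau_k,\omega_k)$. Since $\Pibf^\perp$ annihilates the interference subspace spanned by $\Sbf_I$, the limiting per-node criterion reduces to a normalized matched-filter output whose unique maximizer over $(\tau_k,\omega_k)$ is the true delay--Doppler pair, by the unambiguity hypothesis (\ref{eq:unambiguous}). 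Summing over $k$ and invoking the assumption that the collection $\{(\tau_k,\omega_k)\}$ jointly determines $\thetabf$, the limiting global criterion is uniquely maximized at $\thetabf_0$; uniform convergence on the compact parameter set then yields $\hat{\thetabf}\to\thetabf_0$ in probability.

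Next I would linearize the stationarity condition $V'_{ML}(\hat{\thetabf})=0$ about $\thetabf_0$, giving $\hat{\thetabf}-\thetabf_0\simeq-\Hbf^{-1}V'_{ML}(\thetabf_0)$ with $\Hbf$ the noise-free Hessian, whence
\[
\E[(\hat{\thetabf}-\thetabf_0)(\hat{\thetabf}-\thetabf_0)^T]\approx\Hbf^{-1}\,\E[V'_{ML}(\thetabf_0)\,V'_{ML}(\thetabf_0)^{T}]\,\Hbf^{-1}.
\]
It then remains to evaluate the Hessian and the gradient covariance. For the Hessian I would differentiate $P_k=\ybf_k^H\hat{\Pbf}\ybf_k$ twice through the chain $\thetabf\mapsto(\tau_k,\omega_k)\mapsto\abf$ and evaluate at the noise-free optimum; since $\Tilde{\Pbf}\abf=\boldsymbol{0}$, the contributions along the steering vector itself cancel, leaving only the derivative matrix $\Dbf_k$, while the interference-cancellation projection $\Pibf^\perp$ and the steering-vector normalization projection $\Pbf^\perp$ combine into $\Tilde{\Pbf}=\Pibf^\perp\Pbf^\perp\Pibf^\perp$. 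This gives $\Hbf=2\sum_k|d_k|^2\Re\{\Dbf_k^H\Tilde{\Pbf}\,\Dbf_k\}$ and hence $\textbf{CRB}_{\boldsymbol{\theta}}=\sigma_e^2\Hbf^{-1}$.

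The key computation is the gradient covariance, which I would split by noise source. Evaluated at $\thetabf_0$, the gradient is to first order a linear functional of $\ebf_k$ and $\nbf_k^R$. The SC noise enters only through $\ybf_k$ and contributes $\sigma_e^2\Hbf$ (the classical matched-filter identity that the gradient covariance equals $\sigma_e^2$ times the Hessian), which, sandwiched between the two $\Hbf^{-1}$ factors, reproduces $\textbf{CRB}_{\boldsymbol{\theta}}$. The RC noise requires linearizing $\hat{\abf}$, $\hat{\Pibf}^\perp$ and $\hat{\Pbf}$ in $\nbf_k^R$: the steering-vector perturbation---a delayed, Doppler-shifted copy of $\nbf_k^R$ carrying the $1/a_k$ of the amplitude normalization---produces the mismatched-filter term, while the perturbed interference projection leaves residual DPI and clutter. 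Collecting these three mechanisms into the matrix $\Zbf_k$ of (\ref{eq:Zdef}) and accounting for the samples shared among $\sbf_k$ and the Toeplitz columns of $\Sbf_k$ through the selection matrix $\Jbf_k$, the RC part gives $\Qbf$ of (\ref{eq:Q}), scaling as $\sigma_n^2/|a_k|^2$. Because $\ebf_k$ and $\nbf_k^R$ are independent (within and across nodes) the cross terms vanish, so $\E[V'_{ML}(\thetabf_0)\,V'_{ML}(\thetabf_0)^{T}]=\sigma_e^2\Hbf+\Qbf$, and (\ref{eq:CRBascov}) follows.

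The hard part will be the RC contribution to the gradient covariance. The projections $\hat{\Pibf}^\perp$ and $\hat{\Pbf}$ depend nonlinearly on $\nbf_k^R$, so their first-order perturbations must be computed carefully and shown to organize---together with the steering-vector perturbation---into the compact $\Zbf_k\Jbf_k\Jbf_k^T\Zbf_k^H$ form. The most delicate bookkeeping is verifying that the Toeplitz coupling among the direct, clutter and target terms (all built from the same underlying samples $\sbf_I$) is correctly encoded by $\Jbf_k$, so that the shared perturbations are neither omitted nor double counted.
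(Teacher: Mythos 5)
Your proposal follows exactly the route the paper takes: linearizing the stationarity condition $V'_{ML}(\hat{\thetabf})=0$ around $\thetabf_0$ to get $\hat{\thetabf}-\thetabf_0\simeq-\Hbf^{-1}V'_{ML}(\thetabf_0)$, then splitting the gradient covariance by the two independent noise sources — which is precisely the first-order Taylor-expansion argument (à la Theorem 2 of the cited Viberg--Ottersten reference) that the paper invokes before deferring the detailed evaluation of $\Qbf$ to a future publication. Your sketch is in fact more explicit than the paper's own proof about consistency and about where the three perturbation mechanisms encoded in $\Zbf_k$ come from, and the one step you flag as hard (organizing the perturbations of $\hat{\abf}$, $\hat{\Pibf}^\perp$ and $\hat{\Pbf}$ into the $\Zbf_k\Jbf_k\Jbf_k^T\Zbf_k^H$ form) is exactly the part the paper also leaves unproved.
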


\begin{proof}
The global ML estimate of the target parameters are obtained by maximizing (\ref{eq:GlobalML}) w.r.t. $\thetabf$. Thus, treating $\thetabf$ as a continuous-valued parameter, the gradient is zero at the optimal value,
\begin{equation}
V'_{ML}(\hat{\thetabf}) = 0\, ,
\label{eq:gradient}
\end{equation}
where $\hat{\thetabf}$ denotes the ML estimate.
A first-order Taylor expansion of (\ref{eq:gradient}), (see, e.g., Theorem 2 in \cite{viberg_ottersten_1991}), then eventually leads to (\ref{eq:CRBascov}) -- (\ref{eq:Q}).
Due to space limitations, the details are deferred to a future publication. \pbox
\end{proof}


Using Theorem 1, we can now address the question of how large the errors in the reference signal can be in order for its effect to be negligible. 
Upon comparing the two corresponding terms in (\ref{eq:CRBascov}), we can conclude that the matrix $\Zbf_k^{}\Jbf_k^{} \Jbf_k^T\Zbf_k^{H}$ plays a crucial role. Specifically, 
the second term is negligible if it holds that
$$
\left\| \Zbf_k\Jbf_k \right\|_2^2 \ll \sigma_e^2\,|a_k|^2 / \sigma_n^2 \quad \forall k\,.
$$
This leads to the following corollary to Theorem 1.
\begin{corollary}
Let the assumptions in Theorem 1 hold. Then, the second term in (\ref{eq:CRBascov}) can be neglected if the following holds true for all $k$:
\begin{equation}
(L+1)\,\frac{|b_k|^2 + |d_k|^2 + \|\cbf_k\|^2}{\sigma_e^2} \ll
\frac{|a_k|^2}{\sigma_n^2}\, .
\label{eq:interferencebound}
\end{equation}
\end{corollary}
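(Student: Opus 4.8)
The plan is to show that the excess term $\Hbf^{-1}\Qbf\,\Hbf^{-1}$ in (\ref{eq:CRBascov}) is dominated, in the Loewner order, by a constant multiple of $\textbf{CRB}_{\boldsymbol{\theta}}=\sigma_e^2\Hbf^{-1}$, and then to unwind that matrix inequality into the scalar bound (\ref{eq:interferencebound}). Since $\Hbf$ is positive definite under the stated identifiability and unambiguity assumptions, congruence by $\Hbf^{-1}$ preserves the Loewner order, so $\Hbf^{-1}\Qbf\,\Hbf^{-1}\preceq\sigma_e^2\Hbf^{-1}$ is equivalent to $\Qbf\preceq\sigma_e^2\Hbf$. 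From $\textbf{CRB}_{\boldsymbol{\theta}}=\sigma_e^2\Hbf^{-1}$ together with the displayed form of $\textbf{CRB}_{\boldsymbol{\theta}}$ we read off $\Hbf=2\sum_{k}|d_k|^2\,\Re\{\Dbf_k^H\Tilde{\Pbf}\,\Dbf_k\}$, so that $\Qbf$ and $\Hbf$ share the same sum-over-nodes structure of Hermitian positive-semidefinite blocks. Hence a sufficient condition is the term-wise inequality
$$
\frac{\sigma_n^2}{|a_k|^2}\,\Dbf_k^H\Tilde{\Pbf}\,\Zbf_k^{}\Jbf_k^{}\Jbf_k^T\Zbf_k^H\Tilde{\Pbf}\,\Dbf_k^{}\ \preceq\ \sigma_e^2\,\Dbf_k^H\Tilde{\Pbf}\,\Dbf_k^{}\qquad\forall k\,.
$$

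Next I would eliminate the steering-derivative matrix $\Dbf_k$ and the projection $\Tilde{\Pbf}$. Since $\Tilde{\Pbf}$ is an orthogonal projection, $\Tilde{\Pbf}^2=\Tilde{\Pbf}$, and bounding $\Zbf_k\Jbf_k\Jbf_k^T\Zbf_k^H\preceq\|\Zbf_k\Jbf_k\|_2^2\,\Ibf$ gives $\Dbf_k^H\Tilde{\Pbf}\,\Zbf_k\Jbf_k\Jbf_k^T\Zbf_k^H\Tilde{\Pbf}\,\Dbf_k\preceq\|\Zbf_k\Jbf_k\|_2^2\,\Dbf_k^H\Tilde{\Pbf}\,\Dbf_k$. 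The term-wise inequality, and therefore the negligibility of the second term in (\ref{eq:CRBascov}), then follows whenever $\|\Zbf_k\Jbf_k\|_2^2\ll\sigma_e^2|a_k|^2/\sigma_n^2$, which is exactly the intermediate condition stated before the corollary. The problem thus reduces to upper-bounding the spectral norm of $\Zbf_k\Jbf_k$.

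The final step is an explicit norm estimate through submultiplicativity, $\|\Zbf_k\Jbf_k\|_2^2\le\|\Zbf_k\|_2^2\,\|\Jbf_k\|_2^2$. For the selection matrix, each row of $\Jbf_k$ has a single unit entry, so $\Jbf_k^T\Jbf_k$ is diagonal with entries counting how often each sample of $\sbf_I$ is reused in $\Sbf_I=[\sbf_k,\Sbf_k]$; by the Toeplitz structure no sample occurs in more than the $L+1$ columns, whence $\|\Jbf_k\|_2^2=\lambda_{\max}(\Jbf_k^T\Jbf_k)\le L+1$. For the second factor, writing $\Dbf=\text{diag}(\vbf(\omega_k))$, which is unitary because the DFT vector has unit-modulus entries, a direct computation from (\ref{eq:Zdef}) yields $\Zbf_k\Zbf_k^H=(b_k\Ibf+d_k\Dbf)(b_k\Ibf+d_k\Dbf)^H+\|\cbf_k\|^2\,\Ibf$, whose largest eigenvalue is $\max_n|b_k+d_k e^{j\omega_k n\Delta T}|^2+\|\cbf_k\|^2$. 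Under the order-of-magnitude (``$\ll$'') reading of (\ref{eq:interferencebound}) the cross term is immaterial, so this is a constant multiple of $|b_k|^2+|d_k|^2+\|\cbf_k\|^2$. Combining the two factors gives $\|\Zbf_k\Jbf_k\|_2^2\lesssim(L+1)\big(|b_k|^2+|d_k|^2+\|\cbf_k\|^2\big)$, and substituting this into the intermediate condition produces precisely (\ref{eq:interferencebound}).

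The step I expect to be the main obstacle is the passage from the matrix inequality $\Qbf\preceq\sigma_e^2\Hbf$ to a per-node and then scalar condition: one must justify that the common sum structure of $\Qbf$ and $\Hbf$ legitimately permits a block-by-block comparison (a sufficient, not necessary, reduction), and that the projection $\Tilde{\Pbf}$ can be peeled off without loss using idempotency and the positive-semidefinite ordering. The norm estimates themselves are routine, the only delicate point being the $b_k$--$d_k$ cross term in $\|\Zbf_k\|_2^2$, which is harmless because (\ref{eq:interferencebound}) is stated up to order of magnitude.
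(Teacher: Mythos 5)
Your proposal is correct and follows essentially the same route as the paper: reduce the matrix comparison of the two terms in (\ref{eq:CRBascov}) to the per-node spectral-norm condition $\|\Zbf_k\Jbf_k\|_2^2 \ll \sigma_e^2|a_k|^2/\sigma_n^2$ (which the paper states explicitly just before the corollary), and then bound $\|\Jbf_k\|_2^2 \le L+1$ via the at-most-$(L+1)$-fold reuse of each sample in $\Sbf_I$ and $\|\Zbf_k\|_2^2$ by $|b_k|^2+|d_k|^2+\|\cbf_k\|^2$ up to the immaterial cross-term factor. You merely make explicit the term-wise Loewner comparison and the idempotency of $\Tilde{\Pbf}$, which the paper leaves implicit.
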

The right-hand side of (\ref{eq:interferencebound}) is the SNR in the RC, and the left-hand side is an upper bound on the total interference-to-noise ratio in the SC. We recall that the second term is due to noise in the steering vector (\ref{eq:ahatdef}) when used in (\ref{eq:ML4}). Thus, its effect is proportional to the target power, which may seem counterintuitive.
We note that this bound guarantees that the parameter estimates achieve the CRLB for a known transmitted signal for \textit{all} values of the target parameters, including targets moving at a very slow speed. For faster targets, there is an additional suppression of the clutter due to the matched-filtering, which leads to a looser requirement than (\ref{eq:interferencebound}).


\section{Numerical Examples}
In this section, we compare the theoretical error approximation in \eqref{eq:CRBascov} with results from a Monte Carlo (MC) simulation. To keep the presentation simple, we set \(K=1\), thus using only \( \thetabf = (\tau, \omega) \) as target parameters. The IO test waveform is complex-valued and bandlimited, with bandwidth \( = 8 \text{ MHz} \), unitary power and carrier frequency of \( 600 \text{ MHz} \). The sampling frequency at baseband is \(25 \text{ MS/s}\). The IO signal propagates deterministically in space, and the amplitudes \( a_k \), \( b_k \) and \( d_k \) are determined by the standard bistatic radar equation applied to the geometry under investigation. The clutter filter length is chosen as \(L=70\), and the (single) target is moving at \( \approx 250 \text{ m/s} \). Its starting position is chosen at random (and then kept fixed), so that it falls \emph{within the clutter range}. The acquisition time is chosen to be \( N = 2^{13} \) samples, corresponding to approximately 0.33 ms of data. We present two plots using 
different SC and RC SNR levels respectively, and we plot only one parameter per SNR ``type''. The cost function in \eqref{eq:ML4} is maximized for \(2500\) noise realizations using the Nelder-Mead algorithm implemented in MATLAB's \texttt{fminsearch}. The other quantities at play are kept constant; in particular we had \(75 \) dB RC SNR in Figure \ref{fig:RMSEtau} and \(15\) dB SC SNR in Figure \ref{fig:RMSEomega}. The SNR here is calculated either as \( |d_1|^2 /\sigma_e ^2  \) or as \(|a_1|^2 /\sigma_n ^2 \) in dB. Figure \ref{fig:RMSEtau} shows a tight agreement between MC simulations and the theory down to \mbox{-20} dB SC SNR, and Figure \ref{fig:RMSEomega} highlights the importance of the correction term in \eqref{eq:CRBascov} when the RC SNR is not sufficiently high.
\begin{figure}[htbp]

  \centering
  \begin{subfigure}[b]{0.47\textwidth}
    \centering
    \includegraphics[width=\linewidth]{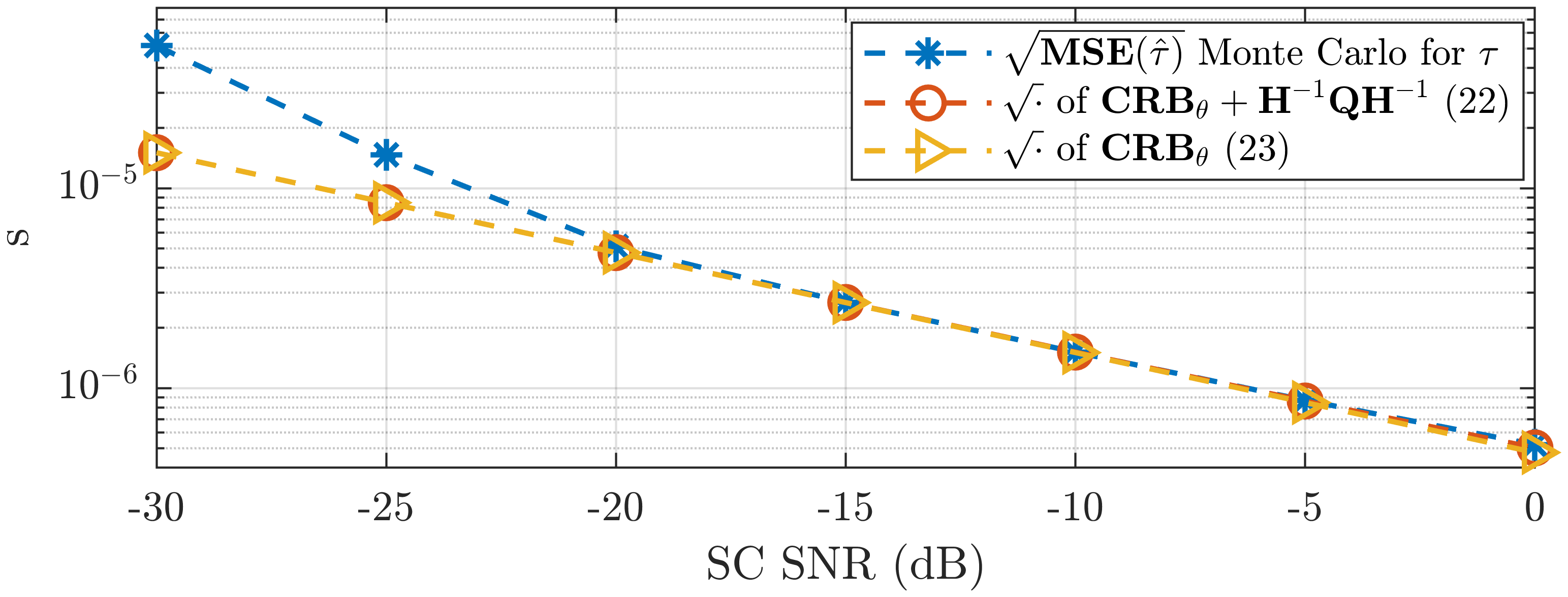}
    \vspace{-15pt}
    \caption{}
     \label{fig:RMSEtau}
  \end{subfigure}
  
  \begin{subfigure}[b]{0.47\textwidth}
    \centering
    \includegraphics[width=\linewidth]{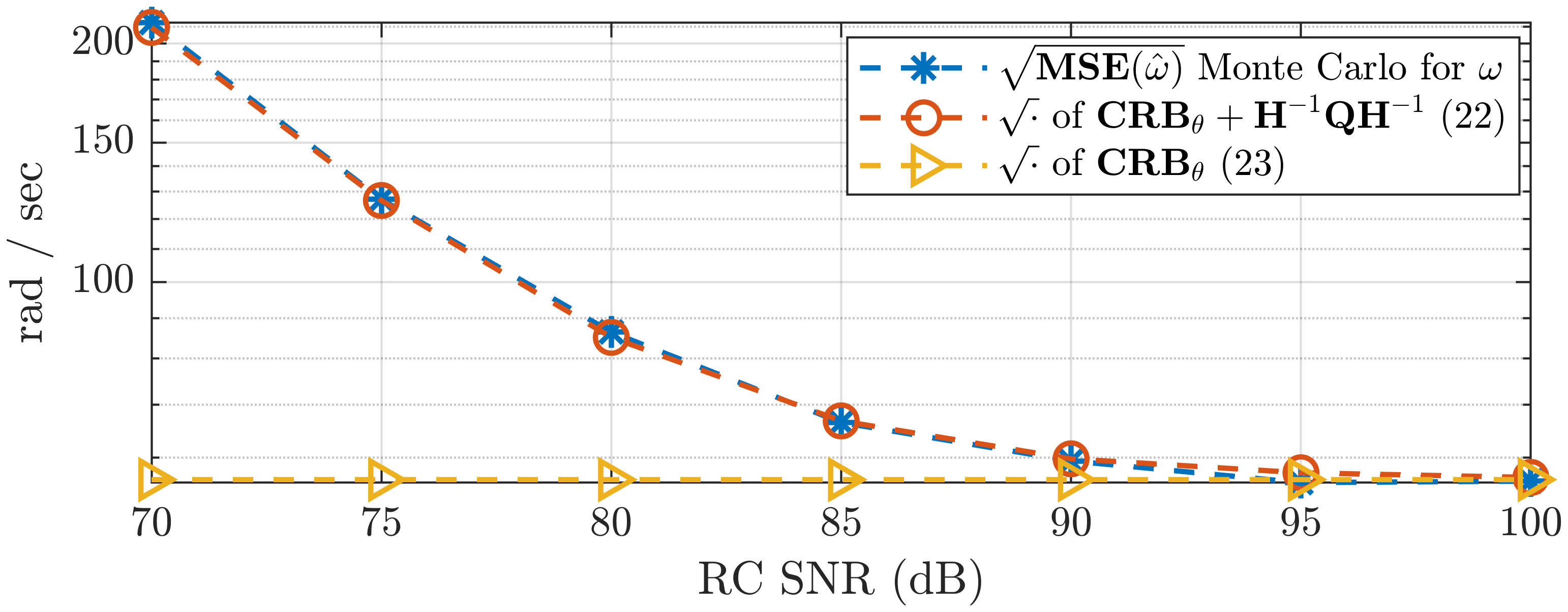}
    \vspace{-15pt}
    \caption{}
    \label{fig:RMSEomega}
  \end{subfigure}

  \caption{Monte Carlo simulation versus theoretical calculations.}
  \label{fig:both}
\end{figure}
\vspace{-10pt}
\section{Conclusions}

In this paper, we have studied the statistical performance of a method for estimating target parameters using passive radar data. An explicit expression for the covariance matrix of the estimated target 
parameters was presented. Based on this, a sufficient condition for the errors in the applied IO waveform to be negligible was given. The theoretical results were corroborated using computer simulations, showing good agreement as well as displaying the effect of IO waveform errors. A more extensive simulation study together with the results of real-data experiments will be presented in a future publication. As a final remark, we note that the theoretical results are easily extended to the frame-based processing approach of \cite{colone_multistage_2009}, which is more practical for very large data sets.

%

\bibliographystyle{ieeetr}
\bibliography{DistributedRadarRefs}

\end{document}